\newcommand{\runningheads}{\markboth}
\theoremstyle{plain}
\newtheorem{theorem}{Theorem}[section]
\newtheorem{lemma}[theorem]{Lemma}
\newtheorem{proposition}[theorem]{Proposition}
\newtheorem{example}[theorem]{Example}
\title{\vspace{3.5cm}\bf ALGORITHMS AND PROPERTIES FOR POSITIVE SYMMETRIZABLE MATRICES}
\author[1]{Elis\^angela Silva Dias}
\author[1]{Diane Castonguay}
\author[2]{Mitre Costa Dourado}
\affil[1]{Instituto de Informática, Universidade Federal de Goi\'as -- Alameda Palmeiras, Quadra D, Câmpus Samambaia, Caixa Postal 131 - CEP 74001-970 -- Goiânia -- GO, BRAZIL

e-mail: \{elisangela,diane\}@inf.ufg.br}
\affil[2]{Departamento de Ci\^{e}ncia da Computa\c{c}\~{a}o -- Instituto de Matem\'{a}tica, Universidade Federal do Rio de Janeiro -- Av. Athos da Silveira Ramos, 274, Ilha do Fundão -- CEP 21941-916 -- Rio de Janeiro -- RJ, BRAZIL

e-mail: mitre@dcc.ufrj.br}
\begin{document}

\maketitle

\runningheads{E.S. Dias, D. Castonguay, M.C. Dourado}{Algorithms and Properties for Positive Symmetrizable Matrices}

\noindent {\bf Abstract:} Matrices are the most common representations of graphs. They are also used for the representation of algebras and cluster algebras. This paper shows some properties of matrices in order to facilitate the understanding and locating symmetrizable matrices with specific characteristics, called positive quasi-Cartan companion matrices. Here, symmetrizable matrix are those which are symmetric when multiplied by a diagonal matrix with positive entries called symmetrizer matrix. Four algorithms are developed: one to decide whether there is a symmetrizer matrix; second to find such symmetrizer matrix; another to decide whether the matrix is positive or not; and the last to find a positive quasi-Cartan companion matrix, if there exists. The third algorithm is used to prove that the problem to decide if a matrix has a positive quasi-Cartan companion is NP.\\[-10 pt]

\noindent {\bf Math. Subj. Classification 2010:} 05B20, 13F60, 15A15.

\noindent {\bf Key Words:} Symmetrizable matrix, Positive quasi-Cartan matrix, Algorithm.

\section{Introduction}
\label{sec:intro}

The matrices can be used to represent various structures, including
graphs and algebras, such as cluster algebra. It can be defined using a directed graph $G(B)$, called quiver, and consequently by an adjcency matrix, where rows and columns represent the vertices and the positive values at positions $(i, j)$ represent the quantity of edges between associated vertices of the graph. For more information about quiver and cluster algebras, see~\cite{BGZ2006,FZ2002,FZ2003}.

Cartan matrices were introduced by the French mathematician \'{E}lie Cartan. In fact, Cartan matrices, in the context of Lie algebras, were first investigated by Wilhelm Killing, whereas the Killing form is due to Cartan.

The notion of quasi-Cartan matrices was introduced by Barot, Geiss and Zelevinsky~\cite{BGZ2006}. They show some properties of the matrices, of the mathematical point of view. Those matrices are symmetrizable, called quasi-Cartan companion, associated to skew-symmetrizable matrices.

By Sylvester criterion~\cite{W1976}, a symmetric matrix is positive if
all leading principal submatrices have positive determinant. We see
that a symmetrizable matrix is positive if its associated symmetric
matrix also is. One can decide whether a cluster algebra is of finite
type (has a finite number of cluster variables) deciding whether it
has a quasi-Cartan companion matrix which is positive, along with
other criteria that will not be discussed in this paper.

In this paper, we study the matrices of mathematical and computational point of view and we verify some inherent properties of the positive quasi-Cartan matrix. We also developed four algorithms. The first one decides in time complexity $\theta(n ^ 2)$ if the matrix is symmetrizable and returns the symmetrizer if it exists. The second one  find a symmetrizer matrix for a symmetrizable matrix, having time complexity $\theta(n^2)$ in the worst case and $\theta(n)$ in the best case. The next decides whether the matrix is positive or not with time complexity $\theta(n^4)$. It is used as prove that the problem to decide if a matrix has a positive quasi-Cartan companion is in NP class. The last algorithm is exponential and it finds a positive quasi-Cartan companion matrix, for a skew-symmetrizable matrix, if there exists.

\section{Preliminaries}
\label{sec:preliminaries}

In this paper, we considered square matrices with integer entries, except the matrix $D$. Let $n$ be a positive integer, $A, B, C \in \mathrm{M}_n(\mathbb {Z})$ and $D \in \mathrm{M}_n(\mathbb {R})$. A matrix $A$ is \textit {symmetric} if $A = A^T$, where $A^T$ is the transpose of $A$. A matrix $C$ is \textit{symmetric by signs} if for all $i, j \in \{1, \dots, n\}$, with $i \neq j$, we have $c_{ij} = c_{ji} = 0$ or $c_{ij} \cdot c_{ji} > 0$. A matrix $C$ is \textit {symmetrizable} if $D \times C$ is symmetric for some diagonal matrix $D$ with positive diagonal entries. In this case, the matrix $D \times C$ is called \textit {symmetrization or symmetrized} of $C$ and the matrix $D$ is called \textit {symmetrizer} of $C$. Note that this definition is equivalent that the one given in~\cite{CM1980}.

The matrix $A$ is \textit{skew-symmetric} if its transpose coincides with its opposite ($A^T = -A$), i.e., $a_{ij} = -a_{ji}$, for all $i, j$. Observe that the values of the main diagonal are null. A matrix $B$ is \textit{skew-symmetric by signs} if for all $i, j \in \{1, \dots, n\}$ we have $b_{ii} = 0$ and if $i \neq j$, then $b_{ij} = b_{ji} = 0$ or $b_{ij} \cdot b_{ji} < 0$. The matrix $B$ is \textit{skew-symmetrizable} if there exists a diagonal matrix $D$ with positive entries such that $D \times B$ is a skew-symmetric matrix. In this case, the matrix $D \times B$ is called \textit{skew-symmetrization} or {\it skew-symmetrized} of $B$ and the matrix $D$ of \textit{skew-symmetrizer} of $B$.

We must observe that every symmetric matrix is symmetrizable and that every skew-symmetric matrix is skew-symmetrizable. Also observe that all symmetrizable matrices are symmetric by signs and all skew-symmetrizable matrices are skew-symmetric by signs.

A symmetrizable matrix is \textit{quasi-Cartan} if all entries of main diagonal are equal to 2. For a skew-symmetrizable matrix $B$, we will refer to a quasi-Cartan matrix $C$ with $|c_{ij}| = |b_{ij}|$ for all $i \neq j$ as a \textit {quasi-Cartan companion} of $B$.

Given a skew-symmetrizable matrix $B$, we want to find a positive quasi-Cartan companion of $B$. For this, we need one more definition.

The matrix $A_{[ij]}$ is obtained by elimination of $i^{th}$ row and $j^{th}$ column of matrix $A$. The $ij^{th}$ \textit{minor} of $A$ is the determinant of $A_{[ij]}$. Recall that the determinant of $A$ can be recursively defined in terms of their minors. For more information, see~\cite{L2003}.

A {\it principal submatrix} of $A$ is a submatrix of $A$ obtained by eliminating some rows and respective columns of $A$.

The \textit{principal minors} are the determinants of all principal submatrices of $A$.

The \textit{leading principal minors} are the determinants of the diagonal blocks of a matrix $M$ with dimension $1, 2, \ldots, n$. These submatrices, also called \textit{leading principal matrices}, are obtained by eliminating the last $k$ columns and $k$ rows, with $k = n-1, n-2, \ldots, 0$.

For any arbitrary matrices $A$ of dimension $m \times n$ and $B$ of dimension $p \times q$, we define the {\it direct sum} of $A$ and $B$, denoted by
$A \oplus B =
  \begin{bmatrix}
     A & 0 \\
     0 & B \\
  \end{bmatrix}$.

Note that any element in the direct sum of two vector spaces of matrices could be represented as a direct sum of two matrices.

A {\it permutation matrix} is a square binary matrix that has exactly one entry 1 in each row and each column and 0s elsewhere. Each such matrix represents a specific permutation of $n$ elements and, when used to multiply another matrix, can produce that permutation in the rows or columns of the other matrix.

Let $A$ be a square matrix. We say that $A$ is {\it disconnected} if there exists $P$ permutation matrices such that $PAP$ is a direct sum  of at least two non-zero matrix. If not, we say $A$ is {\it connected}. Observe that $PAP$ is obtained from $A$ by permutation of rows and respective columns. Moreover, $A$ is connected exaclty when the graph associated to the incidence matrix $A$ is.


\section{Symmetrizable and skew-symmetrizable matrices}
\label{sec:symmetrizer_matrix}

In this section, we show some properties of symmetrizable and skew-symmetrizable matrices. We also present two algorithms for symmetrizable matrix.


%
%
%

The following proposition help us to find in an easy way a quasi-Cartan companion for a skew-symmetrizable matrix.

\begin{theorem}\label{prop:same_symmetrizer}
Let $B$ be a skew-symmetrizable matrix. Consider a matrix $C$ such that $|c_{ij}| = |b_{ij}|$, for all $i \neq j$. If $C$ is symmetric by the signs, then $C$ is symmetrizable with the same symmetrizer of $B$. Furthermore, if $c_{ii} = 2$, for all $i$, then $C$ is a quasi-Cartan companion of $B$.
\end{theorem}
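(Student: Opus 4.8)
The plan is to unpack the definition of skew-symmetrizability for $B$, translate it into a condition on the off-diagonal entries, and then check directly that the same diagonal matrix $D$ symmetrizes $C$. Let $D = \mathrm{diag}(d_1,\dots,d_n)$ with $d_i>0$ be a skew-symmetrizer of $B$, so that $D\times B$ is skew-symmetric; entrywise this says $d_i b_{ij} = -d_j b_{ji}$ for all $i,j$. I want to conclude $d_i c_{ij} = d_j c_{ji}$ for all $i,j$, which is exactly the statement that $D\times C$ is symmetric, i.e. that $D$ is a symmetrizer of $C$.

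First I would dispose of the diagonal: for $i=j$ there is nothing to prove since $d_i c_{ii} = d_i c_{ii}$ trivially (the symmetry condition is automatic on the diagonal, and note $D\times C$ has diagonal entries $d_i c_{ii}$, which cause no obstruction). Next, fix $i\neq j$. From $d_i b_{ij} = -d_j b_{ji}$ and $d_i,d_j>0$, the entries $b_{ij}$ and $b_{ji}$ have opposite signs (or are both zero). Now split into two cases according to the hypothesis that $C$ is symmetric by signs. If $c_{ij}=c_{ji}=0$: then by $|c_{ij}|=|b_{ij}|$ and $|c_{ji}|=|b_{ji}|$ we get $b_{ij}=b_{ji}=0$, and the identity $d_i c_{ij}=d_j c_{ji}$ holds as $0=0$. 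If instead $c_{ij}\cdot c_{ji}>0$, then $c_{ij}$ and $c_{ji}$ are both nonzero with the same sign; moreover $|c_{ij}|=|b_{ij}|\neq 0$ and $|c_{ji}|=|b_{ji}|\neq 0$, so $b_{ij},b_{ji}$ are nonzero with opposite signs. Taking absolute values in $d_i b_{ij} = -d_j b_{ji}$ gives $d_i|b_{ij}| = d_j|b_{ji}|$, hence $d_i|c_{ij}| = d_j|c_{ji}|$. Since $c_{ij}$ and $c_{ji}$ share the same sign, multiplying through by that common sign yields $d_i c_{ij} = d_j c_{ji}$, as desired. This establishes that $D\times C$ is symmetric, so $C$ is symmetrizable with symmetrizer $D$.

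For the final sentence, suppose additionally $c_{ii}=2$ for all $i$. Then $C$ is a symmetrizable matrix all of whose diagonal entries equal $2$, which is precisely the definition of a quasi-Cartan matrix; combined with $|c_{ij}|=|b_{ij}|$ for all $i\neq j$, this is exactly the definition of a quasi-Cartan companion of $B$. Hence $C$ is a quasi-Cartan companion of $B$.

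I do not expect a serious obstacle here; the only point requiring a little care is the sign bookkeeping in the nonzero case — one must use the ``symmetric by signs'' hypothesis to know that $c_{ij}$ and $c_{ji}$ carry a common sign, so that recovering the signed identity $d_i c_{ij}=d_j c_{ji}$ from the unsigned one $d_i|c_{ij}|=d_j|c_{ji}|$ is legitimate. Without that hypothesis the conclusion can genuinely fail, so it is essential that the case analysis above invokes it.
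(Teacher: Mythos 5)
Your proof is correct and follows essentially the same route as the paper: both pass to absolute values via the skew-symmetrizer identity $d_i b_{ij} = -d_j b_{ji}$ to get $d_i|c_{ij}| = d_j|c_{ji}|$, then use the symmetric-by-signs hypothesis to restore the common sign and conclude $d_i c_{ij} = d_j c_{ji}$. Your version merely spells out the zero/nonzero case split and the diagonal check that the paper leaves implicit.
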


\begin{proof}
Let $D =  \left(
        \begin{array}{ccc}
          d_1 & & \\
          & \vdots & \\
          & & d_n \\
        \end{array}
      \right)$
be a symmetrizer matrix of $B$. Then, we have that $d_i \cdot |c_{ij}| = d_i \cdot |b_{ij}| = |d_i \cdot b_{ij}| = |d_j \cdot b_{ji}| = d_j \cdot |b_{ji}| = d_j \cdot |c_{ji}|$. Since $c_{ij}$ and $c_{ji}$ have the same sign, we have that $d_i \cdot c_{ij} = d_j \cdot c_{ji}$.
\end{proof}

The Lemma~\ref{lem:symmetrizable} follows from \cite{FZ2003}(Lemma 7.4).

\begin{lemma}  \label{lem:symmetrizable}
A matrix $C$ is symmetrizable if and only if it is  symmetric by signs and for all $k \geq 3$ and all $i_1, i_2, \ldots, i_k$ it satisfies:
$$ c_{i_1 i_2} \cdot c_{i_2 i_3} \cdot \ldots \cdot c_{i_k i_1} = c_{i_2 i_1} \cdot c_{i_3 i_2} \cdot \ldots \cdot c_{i_1 i_k}.$$
\end{lemma}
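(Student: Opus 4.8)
The plan is to prove the two implications separately, with an explicit construction of a symmetrizer in the nontrivial direction. For \emph{necessity}, suppose $C$ is symmetrizable, so $D\times C$ is symmetric for some $D=\mathrm{diag}(d_1,\dots,d_n)$ with all $d_i>0$; equivalently $d_i c_{ij}=d_j c_{ji}$ for all $i,j$. First I would observe this already forces $C$ to be symmetric by signs: if $c_{ij}=0$ then $d_j c_{ji}=0$ gives $c_{ji}=0$, while if $c_{ij}\neq 0$ then $c_{ji}=(d_i/d_j)c_{ij}$ has the sign of $c_{ij}$, so $c_{ij}c_{ji}>0$. For the cycle identity, given $i_1,\dots,i_k$ (with $i_{k+1}:=i_1$), I would rewrite each factor as $c_{i_m i_{m+1}}=(d_{i_{m+1}}/d_{i_m})\,c_{i_{m+1}i_m}$ and multiply over $m=1,\dots,k$; the scalar $\prod_{m=1}^{k}(d_{i_{m+1}}/d_{i_m})$ telescopes to $1$ around the closed walk, leaving exactly $c_{i_1 i_2}\cdots c_{i_k i_1}=c_{i_2 i_1}\cdots c_{i_1 i_k}$ (this works for all $k\ge 2$, in particular $k\ge 3$).

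For \emph{sufficiency}, assume $C$ is symmetric by signs and satisfies the cycle identity for all $k\ge 3$. I would recast the goal as finding a positive ``potential''. Form the graph $G$ on $\{1,\dots,n\}$ with $i\sim j$ iff $c_{ij}\neq 0$ (well defined since $C$ is symmetric by signs), and for each edge set $r_{ij}:=c_{ij}/c_{ji}>0$. Because $d_i c_{ij}=d_j c_{ji}$ is automatic on the diagonal and on non-edges, a symmetrizer is exactly a positive vector $d$ with $d_j/d_i=r_{ij}$ on every edge. Now fix a spanning forest of $G$, root each component tree, set $d=1$ at the roots, and propagate $d_j=d_i\,r_{ij}$ along tree edges; all values are positive and the relation holds on every tree edge by construction, so only a non-tree edge $\{i,j\}$ needs checking. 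Its endpoints are joined by a unique tree path $i=v_0,v_1,\dots,v_\ell=j$ with $\ell\ge 2$, hence $v_0,v_1,\dots,v_\ell,v_0$ is a cycle of length $\ell+1\ge 3$; telescoping along the tree path gives $d_j/d_i=\prod_{s=1}^{\ell}r_{v_{s-1}v_s}=\big(\prod_{s=1}^{\ell}c_{v_{s-1}v_s}\big)\,/\,\big(\prod_{s=1}^{\ell}c_{v_s v_{s-1}}\big)$, and the cycle identity for this cycle rearranges precisely to the claim that this ratio equals $c_{ij}/c_{ji}=r_{ij}$, i.e.\ $d_i c_{ij}=d_j c_{ji}$. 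Thus $D=\mathrm{diag}(d_1,\dots,d_n)$ is a symmetrizer and $C$ is symmetrizable.

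The main obstacle is the sufficiency direction, and inside it the consistency check on non-tree edges --- essentially the statement that a multiplicative $1$-cochain that is trivial around every cycle is a coboundary. What makes the hypothesis exactly as stated sufficient is that the fundamental cycles of a spanning tree span the cycle space and, in the simple graph $G$, all have length at least $3$; the remaining ingredients (positivity of the $d_i$, the telescoping, and the trivial diagonal and non-edge cases) are routine. Alternatively, as already noted, one may simply invoke Lemma~7.4 of \cite{FZ2003}.
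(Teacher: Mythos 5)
Your proof is correct. Note, however, that the paper itself offers no proof of this lemma at all: it simply states that the result ``follows from'' Lemma~7.4 of Fomin--Zelevinsky \cite{FZ2003}, so there is nothing in the paper to compare against step by step. What you supply is a complete, self-contained argument: the necessity direction by telescoping the ratios $d_{i_{m+1}}/d_{i_m}$ around a closed walk, and the sufficiency direction by the standard ``potential'' construction --- fix a spanning forest, propagate $d_j = d_i\,r_{ij}$ from the roots, and verify consistency on the fundamental cycles. The two points that actually need care are both handled correctly: you observe that symmetry by signs makes the edge set and the ratios $r_{ij}=c_{ij}/c_{ji}>0$ well defined, and you justify why the hypothesis restricted to $k\ge 3$ suffices, namely that every non-tree edge of a simple graph closes a fundamental cycle of length at least $3$ (a tree path of length $\ell=1$ would make $\{i,j\}$ a tree edge). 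The rearrangement of the cycle identity into $\prod_s c_{v_{s-1}v_s}/\prod_s c_{v_s v_{s-1}} = c_{ij}/c_{ji}$ is legitimate because all factors involved are nonzero on edges of $G$. In short, your argument fills a gap the paper leaves to an external reference, and it would be a worthwhile addition; the only alternative ``route'' the paper takes is citation.
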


We present two algorithms, one for deciding whether any matrix has a symmetrizer and returns it if it exists, and another to find symmetrizer matrix for a symmetrizable matrix. The second algorithm has time complexity $\theta(n^2)$ in the worst case and $\theta(n)$ in the best case, and the first algorithm has time complexity $\theta(n^2)$. Thus, if we know the information that a given matrix is symmetrizable, is more suitable using second algorithm.

\begin{center}
\begin{algorithm2e}[H]
 \BlankLine
 \KwIn{A $n \times n$ matrix $A$.}
 \KwOut{If the matrix is symmetrizable or not; and a diagonal matrix $D = d_{ii}$ of positive values, if there exists, such that $D \times A$ is symmetric.}
 \BlankLine

\ForEach{$i \in \{1, \ldots, n\}$}{ \label{sym-1}
	$d_{ii} \leftarrow 0$\label{sym-2}
}

$T \leftarrow \{1, \dots, n\}$ \label{sym-4} ordered list

\While{$T \neq \varnothing$}{ \label{sym-5}

	 $i \leftarrow $ the first element of $T$ \\ \label{sym-6}
    	$T \leftarrow T \setminus \{i\}$\\ \label{sym-7}
 \If{$d_{ii} = 0$}{
					  $d_{ii} \leftarrow 1$    }

	  \ForEach{$j \in T$} { \label{sym-9}

			   \eIf{$a_{ij} \cdot a_{ji} = 0$}{
				      \If{$a_{ij} + a_{ji} \neq 0$}{
					         \Return NO
				      }
			   }
         {

   				         move $j$ for the first position of $T$

				           \eIf{$d_{jj} \neq 0$}{
					              \If{$d_{ii} \cdot a_{ij} \neq d_{jj} \cdot a_{ji}$}{
						                 \Return NO
					              }
                   }
                   {
    			              $d_{jj} \leftarrow \frac{d_{ii} \cdot a_{ij}}{a_{ji}}$
					         }
				      }
	       }
    }

\Return YES, $D$ \label{sym-21}

\BlankLine

\BlankLine

\caption{$SymmetrizableMatrix(A)$ \label{alg:symmetrizable_matrix}}
\end{algorithm2e}
\end{center}

\pagebreak
Observe that if $A$ is connected then we effectuate the line 8 only once.
 
\begin{proposition}
Algorithm~$\ref{alg:symmetrizable_matrix}$ is correct.
\end{proposition}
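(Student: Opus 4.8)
The plan is to read Algorithm~\ref{alg:symmetrizable_matrix} as a breadth-first traversal of the graph $G$ on vertex set $\{1,\dots,n\}$ whose edges are the pairs $\{i,j\}$ with $a_{ij}\cdot a_{ji}\neq 0$, and to prove separately that (i) if the algorithm outputs \textbf{YES} together with a matrix $D$, then $D$ is a genuine symmetrizer of $A$, and (ii) if $A$ is symmetrizable, the algorithm outputs \textbf{YES}. Three bookkeeping observations about the list $T$ will be used throughout. Each index is deleted from $T$ exactly once, so the \textbf{while}-loop runs $n$ times and ``processes'' every index once. For every unordered pair $\{i,j\}$, the body of the inner \textbf{foreach} is executed exactly once, namely when the earlier of $i,j$ (in deletion order) is processed while the other still belongs to $T$. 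Finally, since every edge-neighbour $j$ of the current index $i$ is moved to the front of $T$, the indices of one connected component of $G$ are processed consecutively; hence line~8 is executed exactly once per component, for its first processed index, and every other index already has $d_{ii}\neq 0$ when it is processed.

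For \emph{soundness} I would first record the monotonicity invariant: once a value $d_{kk}$ becomes nonzero it is never modified again, because a $d_{jj}$ is assigned only inside the branch $d_{jj}=0$. Consequently, when the pair $\{i,j\}$ is examined, $d_{ii}$ already equals its final value and, just after the step, so does $d_{jj}$; in the edge case the algorithm either verifies $d_{ii}a_{ij}=d_{jj}a_{ji}$ or forces it by setting $d_{jj}\leftarrow d_{ii}a_{ij}/a_{ji}$, and in the non-edge case the test $a_{ij}+a_{ji}=0$ succeeded, so $a_{ij}=a_{ji}=0$ and the identity is trivial; as this accounts for all pairs (the diagonal being vacuous), $D\times A$ is symmetric. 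Positivity of the diagonal of $D$ is then obtained by induction along the tree of assignment edges grown inside each component: the component root gets value $1$, and a vertex $j$ assigned from an already positive $d_{ii}$ gets $d_{jj}=d_{ii}a_{ij}/a_{ji}>0$, using that on an edge $a_{ij}$ and $a_{ji}$ share a sign.

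For \emph{completeness} I would argue the contrapositive, exhibiting each \textbf{NO} as a certificate of non-symmetrizability. A \textbf{NO} in the first branch occurs when $a_{ij}a_{ji}=0$ but $a_{ij}+a_{ji}\neq 0$, i.e. exactly one of the two entries vanishes; then $A$ is not symmetric by signs and hence not symmetrizable. For a \textbf{NO} in the second branch, one maintains the invariant that the assignment edges used so far span, inside each component, a tree on the already-assigned vertices, and that every edge examined before the current one satisfies $d_{uu}a_{uv}=d_{vv}a_{vu}$ (otherwise the algorithm would have stopped earlier). When $\{i,j\}$ is examined and $d_{ii}a_{ij}\neq d_{jj}a_{ji}$, both $i$ and $j$ are already assigned and, being edge-adjacent, lie in the same tree, so adjoining $\{i,j\}$ closes a cycle $i=i_1,\,i_2,\,\dots,\,i_k=j,\,i_1$ with $k\geq 3$. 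Multiplying the $k-1$ relations $d_{i_t i_t}a_{i_t i_{t+1}}=d_{i_{t+1} i_{t+1}}a_{i_{t+1} i_t}$ along the tree part of the cycle, cancelling the nonzero factors that occur on both sides, and feeding in the strict inequality $d_{i_1 i_1}a_{i_1 i_k}\neq d_{i_k i_k}a_{i_k i_1}$, one obtains $a_{i_1 i_2}a_{i_2 i_3}\cdots a_{i_k i_1}\neq a_{i_2 i_1}a_{i_3 i_2}\cdots a_{i_1 i_k}$, so by Lemma~\ref{lem:symmetrizable} $A$ is not symmetrizable. Termination is immediate, since $|T|$ strictly decreases at every iteration of the \textbf{while}-loop.

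The step I expect to be the real work is the second \textbf{NO} case: one has to pin down the loop invariants (the assignment edges form a growing spanning forest of the visited vertices; every already-examined edge satisfies the local identity) and then carry out the cycle computation so that the algorithm's local inconsistency is transformed precisely into a violation of the multiplicative cycle identity of Lemma~\ref{lem:symmetrizable}, keeping track of signs so that an inequality survives all the cancellations. The traversal-order claim---edge-neighbours pushed to the front of $T$, hence components handled one at a time and every non-root index assigned before it is processed---is intuitively obvious but should likewise be stated as an invariant and checked against the list operations; and, as noted above, to conclude that the returned $D$ has a strictly positive diagonal one must verify that the sign tests performed guarantee $a_{ij}\cdot a_{ji}>0$ on every edge of $G$.
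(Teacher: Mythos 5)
Your plan follows essentially the same route as the paper's proof: a loop invariant ($d_{ii}a_{ij}=d_{jj}a_{ji}$ for every already-examined pair) gives soundness of a \textbf{YES} answer, and a \textbf{NO} on the equality test is refuted for symmetrizable $A$ by closing a cycle of nonzero entries through the chain of assignments and invoking the cycle identity of Lemma~\ref{lem:symmetrizable}. Your version is somewhat more explicit than the paper's --- in particular you flag the need to check that the returned $D$ has strictly positive diagonal, a point the paper's proof passes over in silence --- but the decomposition and the key lemma are identical.
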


\begin{proof}
At the beginning of any iteration of the ``while loop'', $d_{ii} \cdot a_{ij} = d_{jj} \cdot a_{ji}$ for any $j \notin T$ and any $i$.
This is clearly true since at the beginning there is no $j$ and afterwards that $j \notin T$ have passed to the ``while loop'' without return NO.

Therefore, if the algorithm returns $D$, we have that $d_{ii} \cdot a_{ij} = d_{jj} \cdot a_{ji}$ for any pairwise different $i, j$.

Suppose the algorithm returns NO when $A$ is symmetrizable.
Since $A$ is symmetrizable, the algorithm returns NO on line 17 and therefore there exist $i$ and $j$ such that $d_{ii} \cdot a_{ij} \neq d_{jj} \cdot a_{ji}$.

Let $T$ be the list $<i_1, \ldots, i_t>$ at the beginning of the ``while loop''. It follows from line 14, that first indices $\star$ of $T$ have $d_{\star\star} \neq 0$ and the last ones have $d_{\star\star}=0$, that is
there exists $k \in\{0, \ldots, t\}$ such that $d_{i_s i_s}\neq 0$ for all $1\leq s \leq k$ and
$d_{i_s i_s}= 0$ for all $k < s \leq t$. Observe that if $k=0$ that means that $d_{\star\star}=0$, for all $\star \in T$. Since $i$ is the first element in $T$ (line 5) and $d_{jj} \neq 0$ (line 15), we have that $d_{ii}\neq 0$ at the beginning of the ``while loop''. Therefore, $d_{ii}$ and $d_{jj}$ have been defined before and there exist $k \geq 3$, $i_1=i, i_2 =j$ and $i_3, \ldots, i_k \notin T$ such that
$ a_{i_1 i_2} \cdot a_{i_2 i_3} \cdot \ldots \cdot a_{i_k i_1} \neq 0$. Since $A$ is symmetrizable, we have by lemma 3.2, that $a_{i_1 i_2} \cdot a_{i_2 i_3} \cdot \ldots \cdot a_{i_k i_1} = a_{i_2 i_1} \cdot a_{i_3 i_2} \cdot \ldots \cdot a_{i_1 i_k}$. This implies that $d_{ii} \cdot a_{ij} = d_{jj} \cdot a_{ji}$, a contradiction.


\end{proof}

Next algorithm, do essentially the same as the above. The difference is that it not verify if $A$ is symmetrizable but assume it and end as soon as it have calculated all $d_{ii}$.
 
\begin{center}
\begin{algorithm2e}[H]
 \BlankLine
 \KwIn{A symmetrizable matrix $A$.}
 \KwOut{A diagonal matrix $D = d_{ii}$ of positive values such that $D \times A$ is symmetric.}
 \BlankLine

\ForEach{$i \in \{1, \ldots, n\}$}{ \label{sym-1}
	$d_{ii} \leftarrow 0$\label{sym-2}
}

$S \leftarrow \{1, \dots, n\}$ \\ \label{sym-3}
$T \leftarrow \{1, \dots, n\}$ ordered list

\While{$S \neq \varnothing$}{ \label{sym-5}
	 $i \leftarrow $ the first element of $T$ \\ \label{sym-6}
    	$T \leftarrow T \setminus \{i\}$\\ \label{sym-7}
 \If{$d_{ii} = 0$}{
		$d_{ii} \leftarrow 1$\\   
        $S \leftarrow S \setminus \{i\}$ }

	  \ForEach{$j \in T$} { \label{sym-9}
       				     \If{$a_{ji} \neq 0$} {
       				         move $j$ for the first position of $T$
        			       
				           \If{$d_{jj} = 0$}{
    			              $d_{jj} \leftarrow \frac{d_{ii} \cdot a_{ij}}{a_{ji}}$

              			     $S \leftarrow S \setminus \{j\}$
					         }
				      }
    }
}

\Return $D$ \label{sym-21}

\BlankLine

\BlankLine

\caption{$SymmetrizerMatrix(A)$ \label{alg:symmetrizer_matrix}}
\end{algorithm2e}
\end{center}

\begin{proposition}
Algorithm~$\ref{alg:symmetrizer_matrix}$ is correct.
\end{proposition}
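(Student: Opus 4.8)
The plan is to establish, in this order: (i) the algorithm halts and outputs a diagonal matrix all of whose entries have been assigned; (ii) every entry of $D$ is positive; and (iii) $D\times A$ is symmetric, i.e.\ $d_{ii}a_{ij}=d_{jj}a_{ji}$ for all $i,j$. For (i), each iteration of the ``while loop'' deletes exactly one index from $T$, so there are at most $n$ iterations; moreover $k\in S$ holds exactly when $d_{kk}=0$, and when an index $i$ is popped (the current $i$) the code guarantees $d_{ii}\neq 0$, so the invariant $S\subseteq T$ is preserved and ``the first element of $T$'' is well defined whenever $S\neq\varnothing$. Since the loop stops only when $S=\varnothing$, at termination every $d_{ii}$ has been set. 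For (ii), I would induct on the order in which the entries are assigned: a ``root'' entry is set to $1>0$, and a propagated entry is set to $d_{ii}a_{ij}/a_{ji}$, where by Lemma~\ref{lem:symmetrizable} $A$ is symmetric by signs and $a_{ji}\neq 0$, hence $a_{ij}a_{ji}>0$, so $a_{ij}/a_{ji}>0$; together with $d_{ii}>0$ (induction hypothesis) this gives $d_{jj}>0$. Note also that once an entry is nonzero it is never reassigned, since every assignment to $d_{jj}$ sits inside a test ``$d_{jj}=0$''.

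For (iii), I would read the algorithm as a traversal of the graph $G$ of $A$ (vertices $1,\dots,n$; edge $\{i,j\}$ whenever $a_{ij}a_{ji}>0$). Call the index $i$ that triggers the assignment of $d_{jj}$ the \emph{parent} of $j$; since $a_{ji}\neq0$ forces $\{i,j\}\in E(G)$, the parent relation defines a forest $F$ on all $n$ vertices with $F\subseteq E(G)$, and each edge $\{i,j\}$ of $F$ satisfies $d_{ii}a_{ij}=d_{jj}a_{ji}$ by construction (and, by the last remark above, this persists to termination). The key structural claim is that $F$ restricted to each connected component $C$ of $G$ is a spanning tree of $C$; equivalently, each component contains exactly one root. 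Granting this, fix $i\neq j$: if $\{i,j\}\notin E(G)$ then $a_{ij}=a_{ji}=0$ and the identity is trivial; if $\{i,j\}\in F$ it holds by construction; and if $\{i,j\}\in E(G)\setminus F$, let $i=v_0,v_1,\dots,v_m=j$ be the unique $F$-path between $i$ and $j$, necessarily of length $m\geq 2$. Multiplying the parent-edge identities $d_{v_lv_l}a_{v_lv_{l+1}}=d_{v_{l+1}v_{l+1}}a_{v_{l+1}v_l}$ for $l=0,\dots,m-1$ and cancelling the common nonzero factor $d_{v_1v_1}\cdots d_{v_{m-1}v_{m-1}}$ gives $d_{ii}\prod_{l=0}^{m-1}a_{v_lv_{l+1}}=d_{jj}\prod_{l=0}^{m-1}a_{v_{l+1}v_l}$, while Lemma~\ref{lem:symmetrizable} applied to the cyclic sequence $(v_0,v_1,\dots,v_m)$ (length $m+1\geq3$) gives $\bigl(\prod_{l=0}^{m-1}a_{v_lv_{l+1}}\bigr)a_{ji}=\bigl(\prod_{l=0}^{m-1}a_{v_{l+1}v_l}\bigr)a_{ij}$. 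Since all the products and $a_{ij},a_{ji}$ are nonzero, multiplying the first relation by $a_{ij}$, substituting the second, and cancelling $\prod_l a_{v_lv_{l+1}}$ yields $d_{ii}a_{ij}=d_{jj}a_{ji}$, so $D\times A$ is symmetric.

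The main obstacle is the structural claim that each connected component of $G$ receives exactly one root, and this is precisely where the instruction ``move $j$ to the first position of $T$'' is used: it makes the traversal behave like a depth-first search, so that once the first vertex $u$ of a component $C$ gets a value (necessarily as a root, since an earlier neighbour in $C$ would already have a nonzero entry), the search stays inside $C$ — propagating along edges and pushing newly reached vertices to the front of $T$ — until every vertex of $C$ is assigned; in particular the loop cannot halt, nor can a second root in $C$ be created, while any vertex of $C$ still lies in $S$. I would prove this by induction on the iterations, tracking the prefix of $T$ occupied by the already-discovered but not-yet-processed vertices of the current component. An alternative that avoids reasoning about the order of $T$ is to carry the invariant ``at the start of each iteration, $d_{pp}a_{pq}=d_{qq}a_{qp}$ for all $p,q$ in the same component of $G$ with $d_{pp},d_{qq}\neq0$''; but maintaining it across the propagation step still forces the same cycle argument through Lemma~\ref{lem:symmetrizable}, so the essential difficulty is unchanged.
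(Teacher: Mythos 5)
Your proposal is correct, but it takes a genuinely different route from the paper. The paper's proof of this proposition is a two-line remark deferring to the correctness proof of Algorithm~\ref{alg:symmetrizable_matrix}, which itself runs via the loop invariant ``$d_{ii}a_{ij}=d_{jj}a_{ji}$ for all $j\notin T$ and all $i$'' together with a proof by contradiction that invokes Lemma~\ref{lem:symmetrizable} on a cycle through already-processed indices. You instead give a direct, constructive argument: you decompose the computation into a forest of parent edges (on which the identity holds by construction), reduce the remaining cases to non-tree edges within a component, and close the resulting cycle with Lemma~\ref{lem:symmetrizable} --- the same lemma the paper relies on, but deployed explicitly along the unique tree path rather than along an unspecified cycle of indices outside $T$. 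Your version makes explicit several things the paper leaves implicit: termination and well-definedness of ``first element of $T$'' via the invariant $S\subseteq T$, positivity of every $d_{ii}$ (using symmetry by signs), and above all the role of the instruction ``move $j$ to the first position of $T$'', namely that it forces exactly one root per connected component (the paper only gestures at this with its remark that line 8 of Algorithm~\ref{alg:symmetrizable_matrix} executes once when $A$ is connected). The one place where you stop at a sketch is precisely that structural claim; the prefix-of-$T$ invariant you propose does work (the elements of $T$ with $d_{\star\star}\neq 0$ always form a prefix, so a new root is created only after every previously discovered vertex has been popped, and the set of popped vertices is then closed under adjacency, hence a union of components), so the plan is completable as outlined --- but be aware that this is the load-bearing step, since two roots in one component would genuinely produce a wrong $D$, and it is the step your write-up leaves to be carried out in full.
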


\begin{proof}
The behavior of Algorithm~$\ref{alg:symmetrizer_matrix}$ is quite similar to the one of Algorithm~$\ref{alg:symmetrizable_matrix}$. The differences are that Algorithm~$\ref{alg:symmetrizer_matrix}$ does not check whether $A$ is symmetrizable or not and the use of an additional list $S$ which maintains the elements $i$ such that $d_{ii}$ is not defined yet. Since the algorithm assumes that the input matrix $A$ is symmetrizable, this allows to stop when all elements of $D$ have already been defined. This control is done by changing the condition of the ``while loop'' (line 5) accordingly and adding the operations to remove element $i$ of $S$ whenever $d_{ii}$ is defined (lines 10 and 16). 
\end{proof}

\section{Positive quasi-Cartan companion}
\label{sec:quasi_Cartan_companion}

A symmetric matrix $A$ is \textit{positive definite} if $x^T \cdot A \cdot x > 0$ for all vectors $x$ of length $n$, with $x \neq 0$. 
If the symmetrized matrix $D \times C$ is positive definite, then we say that the \textit{quasi-Cartan matrix $C$ is positive}. By Sylvester criterion~\cite{W1976}, be positive definite means that the principal minors of $D \times C$ are all positive.

The Theorem \ref{teo:equivalent_conditions} follows by Sylvester criterion and the fact that $det(C)$ is positive if and only if $det(D \times C)$ is.

\begin{theorem} \label{teo:equivalent_conditions}
Let $C$ be a symmetrizable matrix. The following conditions are equivalent:
\begin{enumerate}
	\item $C$ is positive.
	\item All principal minors of $C$ are positive.
	\item All leading principal minors of $C$ are positive.
\end{enumerate}
\end{theorem}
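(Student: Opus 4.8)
The plan is to fix a symmetrizer and reduce everything to Sylvester's criterion applied to the symmetrization, using the already-granted fact that $\det C$ and $\det(D\times C)$ have the same sign. First I would fix a diagonal matrix $D$ with positive diagonal entries such that $A := D \times C$ is symmetric, so that by definition $C$ is positive precisely when $A$ is positive definite. The basic observation I would record is that the symmetrization localizes to principal submatrices: for any index set $I \subseteq \{1,\dots,n\}$, if $C_I$ and $D_I$ denote the principal submatrices of $C$ and $D$ on $I$, then $D_I$ is again diagonal with positive entries and $A_I = D_I \times C_I$; in particular $A_I$ is symmetric, $C_I$ is symmetrizable with symmetrizer $D_I$, and $\det A_I = \det D_I \cdot \det C_I$ with $\det D_I > 0$, so $\det A_I$ and $\det C_I$ have the same sign. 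The same applies to the leading principal submatrices, which are the special case $I = \{1,\dots,k\}$.

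With this in hand I would prove the cycle $(2) \Rightarrow (3) \Rightarrow (1) \Rightarrow (2)$. The implication $(2) \Rightarrow (3)$ is immediate, since every leading principal submatrix is a principal submatrix. For $(3) \Rightarrow (1)$: if all leading principal minors of $C$ are positive, then by the sign observation all leading principal minors of the symmetric matrix $A$ are positive, hence $A$ is positive definite by Sylvester's criterion, i.e. $C$ is positive. For $(1) \Rightarrow (2)$: if $C$ is positive then $A$ is positive definite; for any $I$, the principal submatrix $A_I$ is obtained by restricting the quadratic form $x \mapsto x^T A x$ to the vectors supported on $I$, so $A_I$ is again positive definite and in particular $\det A_I > 0$; by the sign observation $\det C_I > 0$, and since $I$ was arbitrary all principal minors of $C$ are positive.

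The argument is essentially bookkeeping once the sign observation is set up, so I do not expect a serious obstacle; the only points needing care are (i) checking that a principal submatrix of the symmetrization is exactly the symmetrization of the corresponding principal submatrix of $C$, which is just reading off entries $(D_I C_I)_{ij} = d_i c_{ij} = d_j c_{ji}$ for $i,j \in I$, and (ii) invoking the standard fact that a principal submatrix of a positive definite symmetric matrix is positive definite, which is what powers $(1) \Rightarrow (2)$. Everything else follows from Sylvester's criterion and the already-granted equivalence $\det C > 0 \iff \det(D \times C) > 0$.
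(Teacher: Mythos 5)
Your proposal is correct and follows essentially the same route as the paper, which simply asserts that the theorem ``follows by Sylvester criterion and the fact that $\det(C)$ is positive if and only if $\det(D \times C)$ is''; you have just filled in the details (the identity $A_I = D_I \times C_I$ and the heredity of positive definiteness under passing to principal submatrices) that the paper leaves implicit.
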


\begin{proposition}\label{teo:submatrix}
Let $B$ be a skew-symmetrizable matrix. $B$ has a positive quasi-Cartan companion matrix if and only if any principal submatrix of $B$ has a positive quasi-Cartan companion.
\end{proposition}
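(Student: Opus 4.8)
The plan is to prove the two implications separately, observing that all of the content sits in the forward (``only if'') direction. For the ``if'' direction I would simply note that $B$ is itself a principal submatrix of $B$ (the one obtained by eliminating no rows or columns), so applying the hypothesis to this trivial principal submatrix immediately yields that $B$ has a positive quasi-Cartan companion; nothing more is needed there.

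For the ``only if'' direction, let $C$ be a positive quasi-Cartan companion of $B$ and let $D$ be a symmetrizer of $C$ (which is also a skew-symmetrizer of $B$). Fix an arbitrary principal submatrix $B'$ of $B$, obtained by keeping the rows and columns indexed by some subset $I \subseteq \{1,\dots,n\}$, and let $C'$, $D'$ denote the submatrices of $C$, $D$ on the same index set $I$. I would claim that $C'$ is a positive quasi-Cartan companion of $B'$, and verify this in four steps. First, $B'$ is skew-symmetrizable: the product $D'B'$ is exactly the restriction to $I$ of the skew-symmetric matrix $D \times B$, hence skew-symmetric, and $D'$ still has positive diagonal entries. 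Second, $C'$ is symmetrizable: likewise $D'C'$ is the restriction of the symmetric matrix $D \times C$, hence symmetric, so $D'$ is a symmetrizer of $C'$ (one could alternatively invoke Theorem~\ref{prop:same_symmetrizer}). Third, $C'$ is quasi-Cartan with $|c'_{ij}| = |b'_{ij}|$ for all $i \neq j$ in $I$: the diagonal entries of $C'$ are among those of $C$ and hence all equal to $2$, and the off-diagonal entries of $C'$ and $B'$ are literally the corresponding entries of $C$ and $B$. Fourth, $C'$ is positive.

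For the last step I would use Theorem~\ref{teo:equivalent_conditions}: since $C$ is positive, all principal minors of $C$ are positive; but every principal submatrix of $C'$ is again a principal submatrix of $C$ (``being a principal submatrix'' is transitive), so all principal minors of $C'$ are positive as well, and Theorem~\ref{teo:equivalent_conditions} then gives that $C'$ is positive. Combining the four steps, $C'$ is a positive quasi-Cartan companion of $B'$, and since $B'$ was an arbitrary principal submatrix, the forward implication follows.

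I do not expect a genuine obstacle here: the statement is essentially a bookkeeping consequence of the fact that positive definiteness is inherited by principal submatrices. The only step that needs an actual (short) argument is the fourth one, and there the work is entirely absorbed into the transitivity of taking principal submatrices together with the minor characterization of positivity from Theorem~\ref{teo:equivalent_conditions}; the first three steps are immediate from the definitions.
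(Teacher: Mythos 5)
Your proposal is correct and follows essentially the same route as the paper: both establish that the restricted symmetrizer $D'$ symmetrizes $C'$ (resp.\ skew-symmetrizes $B'$), note that the quasi-Cartan companion structure is preserved entrywise, and deduce positivity of $C'$ from Theorem~\ref{teo:equivalent_conditions}. The only cosmetic difference is that the paper organizes the forward direction as an induction deleting one row and column at a time ($C_{[ii]}$), whereas you treat an arbitrary index set $I$ in one step via the transitivity of taking principal submatrices; the underlying ideas are identical.
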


\begin{proof}
($\Rightarrow$)
Let $C$ be a positive quasi-Cartan companion of $B$. For induction, we just need to observe that $C_{[ii]}$ is symmetrizable matrix and therefore a positive quasi-Cartan companion of $B_{[ii]}$. Since $(D \times C)_{[ii]} = D_{[ii]} \times C_{[ii]}$ is symmetric, we have that $C_{[ii]}$ is a symmetrizable matrix. Similarly, $B_{[ii]}$ is skew-symmetrizable.

It follows from Theorem~\ref{teo:equivalent_conditions} that $C_{[ii]}$ is positive. Therefore, $C_{[ii]}$ is a positive quasi-Cartan companion of $B_{[ii]}$.

($\Leftarrow$)
Follows from the fact that the matrix $B$ is a principal submatrix of herself.
\end{proof}

We present the main ideas of the original proof of Lemma~\ref{lem_2.1}, since they are useful in the sequel.

\begin{lemma} (Barot, Geiss and Zelevinsky \cite{BGZ2006})\\  \label{lem_2.1}
Let $C$ be a positive quasi-Cartan matrix. Then
\begin{itemize}
	\item [(a)] $0 \leq c_{ij} \cdot c_{ji} \leq 3$ for any $i, j$ such that $i \neq j$.
	\item [(b)] $c_{ik} \cdot c_{kj} \cdot c_{ji} = c_{ki} \cdot c_{jk} \cdot c_{ij} \geq 0$ for any pairwise different $i, j, k$.
\end{itemize}
\end{lemma}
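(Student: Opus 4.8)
The plan is to exploit Theorem \ref{teo:equivalent_conditions}, which tells us that every principal submatrix of a positive quasi-Cartan matrix $C$ has positive determinant, and to test this on the small principal submatrices built from the indices appearing in each statement. For part (a), I would fix $i \neq j$ and consider the $2 \times 2$ principal submatrix on rows and columns $i, j$, namely $\left(\begin{smallmatrix} 2 & c_{ij} \\ c_{ji} & 2 \end{smallmatrix}\right)$. Its determinant is $4 - c_{ij} c_{ji}$, which must be positive, giving $c_{ij} c_{ji} < 4$; since the entries are integers this means $c_{ij} c_{ji} \leq 3$. The lower bound $c_{ij} c_{ji} \geq 0$ is not a consequence of positivity but of $C$ being symmetric by signs (a property every symmetrizable matrix has, as noted in the Preliminaries): by definition either $c_{ij} = c_{ji} = 0$ or $c_{ij} c_{ji} > 0$, so the product is always nonnegative. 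Combining the two bounds gives $0 \leq c_{ij} c_{ji} \leq 3$.

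For part (b), I would take three pairwise different indices $i, j, k$ and look at the $3 \times 3$ principal submatrix
\[
M = \begin{pmatrix} 2 & c_{ij} & c_{ik} \\ c_{ji} & 2 & c_{jk} \\ c_{ki} & c_{kj} & 2 \end{pmatrix}.
\]
The equality $c_{ik} c_{kj} c_{ji} = c_{ki} c_{jk} c_{ij}$ is immediate from Lemma \ref{lem:symmetrizable} applied to the $3$-cycle $i \to j \to k \to i$ (equivalently, it is the condition that $D \times C$ be symmetric applied to these three indices), and it holds for any symmetrizable $C$, not just positive ones. It remains to show this common value, call it $p$, is nonnegative. Expanding $\det M$ along the first row yields $\det M = 8 - 2 c_{jk} c_{kj} - 2 c_{ij} c_{ji} - 2 c_{ik} c_{ki} + c_{ij} c_{jk} c_{ki} + c_{ik} c_{ji} c_{kj} = 8 - 2(c_{jk}c_{kj} + c_{ij}c_{ji} + c_{ik}c_{ki}) + 2p$. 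By Theorem \ref{teo:equivalent_conditions} this is positive, so $2p > 2(c_{jk}c_{kj} + c_{ij}c_{ji} + c_{ik}c_{ki}) - 8$. If $p < 0$ then, since $p$ is an integer, $p \leq -1$, and I would need to derive a contradiction; the cleanest route is to observe that $p \leq -1$ forces each of the three products $c_{ij}c_{ji}$, $c_{jk}c_{kj}$, $c_{ik}c_{ki}$ to be at least $1$ (if one of them were $0$, then one of the six entries $c_{\bullet\bullet}$ in that pair would be $0$ by the symmetric-by-signs property, making $p = 0$), hence by part (a) each lies in $\{1, 2, 3\}$, and then $\det M = 8 - 2(\text{sum}) + 2p \leq 8 - 2\cdot 3 + 2(-1) = 0$, contradicting positivity. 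Therefore $p \geq 0$.

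The only genuine obstacle is the last contradiction in part (b): one must be careful that a negative value of $p$ really does force all three two-cycle products to be nonzero, and then check that the resulting bound on $\det M$ is tight enough. Writing $a = c_{ij}c_{ji}$, $b = c_{jk}c_{kj}$, $c = c_{ik}c_{ki}$ and $\det M = 8 - 2(a+b+c) + 2p$, positivity gives $a + b + c - p < 4$; with $a,b,c \in \{1,2,3\}$ and $p \leq -1$ this is impossible since $a+b+c-p \geq 3 + 1 = 4$. I would present exactly this inequality chain. Everything else is routine $2\times 2$ and $3 \times 3$ determinant arithmetic together with the integrality of the matrix entries and the symmetric-by-signs property, both already available from the Preliminaries.
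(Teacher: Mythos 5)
Your proposal is correct and follows essentially the same route as the paper's proof: the $2\times 2$ principal minor plus the symmetric-by-signs property for part (a), and for part (b) the cycle identity from Lemma \ref{lem:symmetrizable} together with the $3\times 3$ determinant inequality $c_{ik}c_{kj}c_{ji} > c_{ij}c_{ji}+c_{ik}c_{ki}+c_{jk}c_{kj}-4$ and the observation that a nonzero triple product forces each two-cycle product to be at least $1$. Your contradiction argument with $a+b+c-p\geq 4$ is just a rephrasing of the paper's direct bound $p > 3-4 = -1$.
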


\begin{proof}
\begin{itemize}
	\item [(a)] Let $C' =  \left(
        \begin{array}{cc}
          2 & c_{ij} \\
          c_{ji} & 2 \\
        \end{array}
      \right)$ be a principal submatrix of $C$. Since $C$ is a symmetrizable matrix, it is symmetric by signs and $c_{ij} \cdot c_{ji} \geq 0$. Since $C$ is positive, then $det(C') = 4 - c_{ij} \cdot c_{ji} > 0$. Therefore, $c_{ij} \cdot c_{ji} \leq 3$.

      Since $C$ is symmetrizable, we have that sgn($c_{ij}$) = sgn($c_{ji}$). It follows that $c_{ij} \cdot c_{ji} \geq 0$.
	
	\item [(b)] Let $c_{ik} \cdot c_{kj} \cdot c_{ji} \neq 0$. Since that $C$ is symmetrizable, one can see that $c_{ki} \cdot c_{jk} \cdot c_{ij} = c_{ik} \cdot c_{kj} \cdot c_{ji}$. The condition of positivity to the principal minors $3 \times 3$ of $C$ in rows and columns $i, j, k$ can be rewritten as:
\begin{equation}
c_{ik} \cdot c_{kj} \cdot c_{ji} > c_{ij} \cdot c_{ji} + c_{ik} \cdot c_{ki} + c_{jk} \cdot c_{kj} - 4
\end{equation}	
Since $c_{ik} \cdot c_{kj} \cdot c_{ji} \neq 0$ we have that $|c_{st}| \geq 1$ and thus $c_{st} \cdot c_{ts} \geq 1$ for $(s,t) \in \{(i,k), (k,j), (j,i)\}$. Therefore, $c_{ik} \cdot c_{kj} \cdot c_{ji} > 3 - 4 = -1$. This yields the conclusion.
\end{itemize}
\end{proof}

We must observe that for a $3 \times 3$ quasi-Cartan matrix we have three leading principal submatrices that are: the $1 \times 1$ submatrix, that obviously has positive determinant; the $2 \times 2$ submatrix, that is positive due Lemma~\ref{lem_2.1}, when $0 \leq c_{ij} \cdot c_{ji} \leq 3$ and, finally, the $3 \times 3$ submatrix itself.

We define $C^+ = (c^+_{ij})$ such that $c^+_{ij} = |b_{ij}|$ and $c^+_{ii} = 2$.

\begin{proposition} \label{prop:all_positive}
Let $B$ be a skew-symmetrizable matrix of dimension $3 \times 3$. Then $B$ has a positive quasi-Cartan companion if and only if the matrix $C^+ = (c^+_{ij})$ is positive.
\end{proposition}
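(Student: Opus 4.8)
The plan is to prove the equivalence through the leading/principal minor characterisation of positivity in Theorem~\ref{teo:equivalent_conditions}, using the observation that $C^+$ and any quasi-Cartan companion $C$ of $B$ differ only in the signs of their off-diagonal entries. First I would record a fact used for both directions: $C^+$ is symmetric by signs, because $B$ is skew-symmetric by signs (if $b_{ij}=0$ then $b_{ji}=0$, so $c^+_{ij}=c^+_{ji}=0$; if $b_{ij}\neq 0$ then $b_{ji}\neq 0$, so $c^+_{ij}c^+_{ji}=|b_{ij}b_{ji}|>0$); since moreover $|c^+_{ij}|=|b_{ij}|$ and every diagonal entry of $C^+$ is $2$, Theorem~\ref{prop:same_symmetrizer} shows that $C^+$ is a quasi-Cartan companion of $B$, in particular symmetrizable, so Theorem~\ref{teo:equivalent_conditions} applies to it. The direction $(\Leftarrow)$ is then immediate: if $C^+$ is positive it witnesses that $B$ has a positive quasi-Cartan companion.

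For $(\Rightarrow)$, let $C$ be a positive quasi-Cartan companion of $B$; I would show that every principal minor of $C^+$ is positive and then conclude by Theorem~\ref{teo:equivalent_conditions}. The three $1\times 1$ principal minors of $C^+$ equal $2>0$. Because $C$, being symmetrizable, is symmetric by signs, $c_{ij}c_{ji}=|c_{ij}||c_{ji}|=|b_{ij}||b_{ji}|=c^+_{ij}c^+_{ji}$ for $i\neq j$, so each $2\times 2$ principal minor of $C^+$ equals the corresponding minor $4-c_{ij}c_{ji}$ of $C$, which is positive. The remaining step is $\det C^+$: expanding the $3\times 3$ determinant yields $\det C = 8 - 2(c_{12}c_{21}+c_{13}c_{31}+c_{23}c_{32}) + c_{12}c_{23}c_{31} + c_{21}c_{32}c_{13}$, and by Lemma~\ref{lem:symmetrizable} the two ``triangle'' products agree, so $\det C = 8 - 2(c_{12}c_{21}+c_{13}c_{31}+c_{23}c_{32}) + 2c_{12}c_{23}c_{31}$, and the identical formula holds for $C^+$. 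By Lemma~\ref{lem_2.1}(b), $c_{12}c_{23}c_{31}\geq 0$, hence $c_{12}c_{23}c_{31}=|b_{12}b_{23}b_{31}|=c^+_{12}c^+_{23}c^+_{31}$; together with $c_{ij}c_{ji}=c^+_{ij}c^+_{ji}$ this gives $\det C^+ = \det C > 0$, so all principal minors of $C^+$ are positive and $C^+$ is positive.

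The only real computation is the $3\times 3$ determinant expansion and the bookkeeping of off-diagonal products; the key idea is that the symmetrizability identity (Lemma~\ref{lem:symmetrizable}) merges the two triangle terms and, with the sign constraint of Lemma~\ref{lem_2.1}(b), forces $C$ and $C^+$ to have the same principal minors. I expect no genuine obstacle beyond care with indices. In fact Lemma~\ref{lem_2.1}(b) is not strictly needed here: the crude bound $c_{12}c_{23}c_{31}\leq |c_{12}c_{23}c_{31}| = |b_{12}b_{23}b_{31}|$ already gives $\det C^+\geq \det C>0$, which is all that $(\Rightarrow)$ requires.
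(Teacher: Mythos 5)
Your proposal is correct and follows essentially the same route as the paper: both directions rest on Theorem~\ref{prop:same_symmetrizer} to see that $C^+$ is a quasi-Cartan companion, and the forward direction compares the principal minors of $C^+$ with those of $C$ via the explicit $3\times 3$ determinant expansion, with the paper using exactly the ``crude bound'' $c_{12}c_{23}c_{31}\leq |c_{12}c_{23}c_{31}|$ that you mention at the end to conclude $\det C^+ \geq \det C > 0$. Your additional observation via Lemma~\ref{lem_2.1}(b) that the determinants are in fact equal is a harmless refinement (and is implicitly used later in the paper's proof of Lemma~\ref{lemma_3x3}).
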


\begin{proof}
($\Rightarrow$) 
Suppose there exists a positive quasi-Cartan companion matrix $C$. Clearly, $0 \leq  c^+_{ij} \cdot  c^+_{ji} \leq 3$. Thus, $det(C^+) = 8 - 2 \cdot c^+_{jk} \cdot  c^+_{kj} - 2 \cdot c^+_{ij} \cdot  c^+_{ji} - 2 \cdot c^+_{ki} \cdot  c^+_{ik} + 2 \cdot c^+_{ij} \cdot  c^+_{jk} \cdot  c^+_{ki}$ $ = 8 - 2 \cdot c_{jk} \cdot  c_{kj} - 2 \cdot c_{ij} \cdot  c_{ji} - 2 \cdot c_{ki} \cdot  c_{ik} + 2 \cdot |c_{ij} \cdot  c_{jk} \cdot  c_{ki}| \geq det(C) > 0$. Since all leading principal minors of $C^+$ are positive, we have $C^+$ is also a positive quasi-Cartan companion of $B$.

($\Leftarrow$) 
Since $C$ is quasi-Cartan matrix, follows from Theorem~ \ref{prop:same_symmetrizer} that $C^+$ is a quasi-Cartan companion of $B$.
\end{proof}

\begin{lemma} \label{lemma_3x3}
Let $C$ be a $3 \times 3$ positive quasi-Cartan matrix.
\begin{enumerate}
	\item If $C$ is connected, then $0 \leq c_{ij} \cdot c_{ji} \leq 2$ for any $i, j$ such that $i \neq j$.
	\item $0 \leq c_{ik} \cdot c_{kj} \cdot c_{ji} \leq 2$ for any pairwise different $i, j, k$.
\end{enumerate}
\end{lemma}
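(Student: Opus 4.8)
The plan is to exploit the positivity of the $3\times 3$ leading principal minor of $C$ together with the already-established bounds from Lemma~\ref{lem_2.1}. Write $p = c_{ij}c_{ji}$, $q = c_{jk}c_{kj}$, $r = c_{ik}c_{ki}$, and $t = c_{ik}c_{kj}c_{ji}$ (which equals $c_{ki}c_{jk}c_{ij}$ by symmetrizability, as in Lemma~\ref{lem_2.1}(b)). By Lemma~\ref{lem_2.1} we already know $0 \le p,q,r \le 3$ and $t \ge 0$; note also $t^2 = pqr$ since the product of the two equal triple-products is $(c_{ij}c_{ji})(c_{jk}c_{kj})(c_{ik}c_{ki})$. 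Expanding $\det C$ along the first row (using $c_{ii}=2$) gives the positivity condition
\[
\det C \;=\; 8 - 2p - 2q - 2r + 2t \;>\; 0,
\]
i.e. $p + q + r \le 3 + t$ (strict, over integers $\le 3+t$). This single inequality, combined with $t^2 = pqr$ and the range constraints, is what I would push on.

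For part 2, I would argue by contradiction: suppose $t \ge 3$. Since $t^2 = pqr$ and each of $p,q,r$ is at most $3$, we get $t^2 = pqr \le 3\min(p,q,r)\cdot\,$(something)$\,\le 27$, so $t \le 5$; more usefully, $t^2 = pqr \le 27$ forces $t \le 5$, and if $t=3,4,5$ then $pqr = 9,16,25$ with each factor in $\{1,2,3\}$. One checks the few factorizations: $pqr=9$ needs $\{1,3,3\}$ or $\{3,3,1\}$ giving $p+q+r \ge 7 > 3 + 3 = 6$; $pqr = 16$ is impossible with factors $\le 3$ except $\{?\}$ — actually $16$ has no factorization into three integers each $\le 3$ (since $2\cdot2\cdot2 = 8 < 16$ and introducing a $3$ gives non-integer), so $t = 4$ is already ruled out; similarly $pqr = 25$ needs a factor $5$, impossible. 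Hence only $t = 3$ survives the divisibility test, and there $p + q + r \ge 7$ contradicts $p+q+r \le 3 + t = 6$. Therefore $t \le 2$. The lower bound $t \ge 0$ is Lemma~\ref{lem_2.1}(b).

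For part 1, assume $C$ is connected, so no row/column pair is identically zero after permutation; in the $3\times 3$ connected case this means at least two of $p,q,r$ are nonzero, hence $\ge 1$. Suppose for contradiction some $p = 3$ (the only value in $\{0,1,2,3\}$ to exclude). Connectedness gives another of the three, say $q \ge 1$. If the third one $r \ge 1$ as well, then $t^2 = pqr \ge 3$, so $t \ge 2$; combined with part~2, $t = 2$, forcing $pqr = 4$, incompatible with $p = 3$. If instead $r = 0$, then $t^2 = pqr = 0$ so $t = 0$, and the determinant condition reads $8 - 2\cdot 3 - 2q - 0 + 0 > 0$, i.e. $2q < 2$, i.e. $q < 1$, contradicting $q \ge 1$. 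Either way we reach a contradiction, so $p \le 2$, and by symmetry the same holds for $q$ and $r$.

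The main obstacle I anticipate is the bookkeeping in part~1: one must be careful that "connected" for a $3\times 3$ matrix is correctly translated into the statement that the associated graph on three vertices has no isolated vertex, hence at least two of the three off-diagonal symmetric products are positive — and then handle the subcase split on whether the third product vanishes. The integrality of $p,q,r$ (they are products $c_{ij}c_{ji}$ of integers with equal sign, hence nonnegative integers) is doing real work and should be stated explicitly before the case analysis.
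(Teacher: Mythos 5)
Your proof is correct, but it is organized quite differently from the paper's. The paper first reduces to the case $C=C^{+}$ (all off-diagonal entries nonnegative, via Proposition~\ref{prop:all_positive}), proves item~1 by an entry-by-entry analysis --- assuming $c_{12}=3$, hence $c_{21}=1$, then using connectedness and positivity to force $c_{32}=3$, $c_{23}=1$, $c_{13}=c_{31}=1$ and computing $\det C=0$ --- and only then deduces item~2 from the contrapositive of item~1 together with a short argument on the remaining entries. You go the other way: you prove item~2 first, with no connectedness hypothesis, by packaging everything into the sign-invariant quantities $p,q,r,t$, the identity $t^{2}=pqr$, and the single inequality $p+q+r\le 3+t$ coming from $\det C>0$, and then you derive item~1 from item~2 by a two-case argument on whether the third pairwise product vanishes. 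Your route buys two things: it never needs the normalization $C=C^{+}$ (everything is phrased in products that are insensitive to the signs), and the integer-factorization step ($pqr=9$ forces $\{p,q,r\}=\{1,3,3\}$, hence $p+q+r=7>6$) replaces the paper's somewhat opaque claim that the triple product would have to be at least $4$. The paper's route, by contrast, makes the extremal configuration explicit (it exhibits exactly which matrix is being excluded), which is informative but more bookkeeping-heavy. All the ingredients you rely on (Lemma~\ref{lem_2.1}, symmetry by signs giving $p,q,r\in\mathbb{Z}_{\ge 0}$, the determinant expansion, and the reading of connectedness as ``no isolated vertex among the three'') are available at this point in the paper, so your argument is a legitimate alternative proof.
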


\begin{proof}
The Proposition~\ref{prop:all_positive} shows that $det(C^+) > det(C)$. Thus, $C^+$ is positive. For sake of simplicity, consider that $C=C^+$.
\begin{enumerate}
	\item By Lemma~\ref{lem_2.1}, $0 \leq c_{ij} \cdot c_{ji} \leq 3$. Suppose, without loss of generality, that $c_{12}=3$. Then, $c_{21}=1$ and $det(C)=8 + 2\cdot c_{12} \cdot c_{23} \cdot c_{31} - 2 \cdot c_{12} \cdot c_{21} - 2 \cdot c_{13} \cdot c_{31} - 2 \cdot c_{23} \cdot c_{32} = 2 + 6 \cdot c_{23} \cdot c_{31} - 2 \cdot c_{13} \cdot c_{31} - 2 \cdot c_{23} \cdot c_{32} > 0$.
	
	Since $C$ is connected, $c_{23} \neq 0$ or $c_{31} \neq 0$. The positiveness of $C$ implies that both $c_{23}$ and $c_{31}$ are non zero. Recall that since $C$ is symmetrizable, we have that $c_{32} \neq 0$ and $c_{13} \neq 0$.
	
	Since $c_{12} \cdot c_{23} \cdot c_{31} = c_{21} \cdot c_{32} \cdot c_{13}$, we have that $3 \cdot c_{23} \cdot c_{31} = c_{32} \cdot c_{13}$ and $c_{32}=3$ or $c_{13}=3$. By symmetry, suppose that $c_{32}=3$. Then, $c_{23}=1$ and $c_{31}=c_{13}$. Since $c_{13}^2 = c_{13} \cdot c_{31} \leq 3$, we conclude that $c_{31}=c_{13}=1$.
	
	On the other way, $det(C)=2 + 6 \cdot c_{23} \cdot c_{31} - 2 \cdot c_{13} \cdot c_{31} - 2 \cdot c_{23} \cdot c_{32} = 2 + 6-2-6 = 0$. This yields a contradiction to the positiveness of $C$.
	
	Therefore, $c_{ij} \cdot c_{ji} \leq 2$, for all $i, j$.
	
	\item Suppose that $c_{12} \cdot c_{23} \cdot c_{31} = c_{13} \cdot c_{32} \cdot c_{21} \geq 3$. Suppose that $c_{12} \cdot c_{21}=3$, by the above, using the contrapositive of item 1, we have that $c_{23}=0$ and $c_{31} = 0$. A contradiction to the hypothesis. This implies that $c_{ij} \cdot c_{ji} \leq 2$ for all $i,j$ and that $c_{12} \cdot c_{23} \cdot c_{31} \geq 4$. We can suppose, without loss of generality, that $c_{12}=2$ and $c_{23}=2$. Thus, $c_{21}=1$ and $c_{32}=1$. Then, $c_{13} = 4$ and $c_{31} \neq 0$ is a contradiction to Lemma~\ref{lem_2.1}. Therefore, $0 \leq c_{12} \cdot c_{23} \cdot c_{31} \leq 2$.
\end{enumerate}
\end{proof}

\begin{theorem}\label{teo:connected}
Let $C$ be a positive quasi-Cartan matrix $n \times n$, with $n \geq 3$.
\begin{enumerate}
	\item If $C$ is connected, then $0 \leq c_{ij} \cdot c_{ji} \leq 2$ for all $i, j$.
	\item $0 \leq c_{ik} \cdot c_{kj} \cdot c_{ji} \leq 2$ for all pairwise different $i, j, k$.
\end{enumerate}
\end{theorem}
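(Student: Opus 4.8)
The plan is to reduce the general $n \times n$ case to the $3 \times 3$ case already settled in Lemma~\ref{lemma_3x3}, using Proposition~\ref{teo:submatrix} (or rather its underlying idea that principal submatrices of positive quasi-Cartan matrices are again positive quasi-Cartan). For item 2, this is essentially immediate: given pairwise different indices $i, j, k$, the $3 \times 3$ principal submatrix $C'$ of $C$ on rows and columns $i, j, k$ is symmetrizable (restrict the symmetrizer) and positive (by Theorem~\ref{teo:equivalent_conditions}, since every principal minor of $C'$ is a principal minor of $C$, hence positive), and its diagonal entries are all $2$, so $C'$ is a $3 \times 3$ positive quasi-Cartan matrix. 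Applying Lemma~\ref{lemma_3x3}(2) to $C'$ gives $0 \le c_{ik} \cdot c_{kj} \cdot c_{ji} \le 2$ directly, since these entries of $C'$ coincide with the corresponding entries of $C$.

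For item 1, I would first observe that item 2, now established, already gives $0 \le c_{ik} \cdot c_{kj} \cdot c_{ji} \le 2$ for all triples. Fix $i \ne j$; by Lemma~\ref{lem_2.1}(a) we know $0 \le c_{ij} \cdot c_{ji} \le 3$, so the only case to rule out is $c_{ij} \cdot c_{ji} = 3$, say $c_{ij} = 3$, $c_{ji} = 1$ (up to swapping $i,j$ and scaling; note $C^+$ may be used here as in Lemma~\ref{lemma_3x3}, or one argues directly with $|c_{ij}|$). The key point is that $C$ is connected, so there is a path in the associated graph from $i$ to $j$ avoiding the edge $ij$; in particular there is some index $k \notin \{i,j\}$ with $c_{ik} \ne 0$ (equivalently $c_{ki} \ne 0$, by symmetry by signs). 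Then the $3\times 3$ principal submatrix on $\{i,j,k\}$ is positive quasi-Cartan with $c_{ij}\cdot c_{ji}=3$; but Lemma~\ref{lemma_3x3}(1) forbids a product equal to $3$ \emph{provided that submatrix is connected}. The subtlety is that the $\{i,j,k\}$-submatrix need not be connected even though $C$ is. So instead I would apply the contrapositive of Lemma~\ref{lemma_3x3}(1): since $c_{ij}\cdot c_{ji} = 3 > 2$, the submatrix is \emph{not} connected, which forces $c_{ik}\cdot c_{ki} = 0$ (and $c_{jk}\cdot c_{kj}=0$) — the only way a $3\times 3$ matrix with $c_{ij},c_{ji}\ne 0$ can be disconnected. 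This must hold for \emph{every} $k \notin \{i,j\}$, meaning that no index $k$ is adjacent to $i$ or to $j$ other than through each other; hence the whole connected component of $C$ containing $i$ is just $\{i,j\}$, contradicting connectedness of $C$ (as $n \ge 3$). Therefore $c_{ij}\cdot c_{ji} \le 2$.

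The main obstacle is the disconnectedness subtlety in item 1: one must be careful that "restrict to a $3\times 3$ submatrix and quote Lemma~\ref{lemma_3x3}(1)" does not apply verbatim, because that lemma's hypothesis is connectedness of the $3\times 3$ matrix, not of the ambient matrix. The fix — reading Lemma~\ref{lemma_3x3}(1) contrapositively to deduce that each $\{i,j,k\}$-submatrix is disconnected, then aggregating over all $k$ to contradict connectedness of $C$ — is the real content of the argument. Everything else (the reduction via Theorem~\ref{teo:equivalent_conditions} that principal submatrices stay positive, and the bookkeeping that entries of the submatrix match those of $C$) is routine.
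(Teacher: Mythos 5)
Your proposal is correct and follows essentially the same route as the paper: item 2 is obtained by restricting to the $3\times 3$ principal submatrix on $\{i,j,k\}$ and invoking Lemma~\ref{lemma_3x3}, and item 1 is proved by assuming $c_{ij}\cdot c_{ji}\geq 3$, using the contrapositive of Lemma~\ref{lemma_3x3}(1) to conclude that every $\{i,j,k\}$-submatrix is disconnected (hence $c_{ik}=c_{ki}=c_{jk}=c_{kj}=0$ for all $k$), and aggregating over all $k$ to contradict the connectedness of $C$. Your explicit discussion of why the naive quotation of Lemma~\ref{lemma_3x3}(1) does not apply verbatim is exactly the point the paper's proof handles, so there is no gap.
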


\begin{proof}
Suppose that exist $i, j$ such that $c_{ij} \cdot c_{ji} \geq 3$. Since $n \geq 3$, there exists $k \notin \{i,j\}$. Consider the principal submatrix $C'$ of $C$, formed of rows and columns $i, j$ and $k$.
By Lemma \ref{lemma_3x3}, $C'$ is disconnected. This implies that $c_{ik} = c_{kj} = 0$ and thus $c_{ki} = c_{jk} = 0$. Since this is true for all $k \notin \{i, j\}$, we have that $C$ is disconnected, a contradiction. The second item follows from Lemma \ref{lemma_3x3} by considering the principal submatrix of $C$, formed of rows and columns $i, j$ and $k$.
\end{proof}

\begin{theorem} \label{prop:all_positive2}
Let $B$ be a skew-symmetrizable matrix such that $b_{ij} \neq 0$ for all $i, j$. Then $B$ has a positive quasi-Cartan companion if and only if $C^+ = (c^+_{ij})$ defined by $c^+_{ij} = |b_{ij}|$ and $c^+_{ii} = 2$ is positive.
\end{theorem}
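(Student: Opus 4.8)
The plan is to treat the two implications separately, the reverse one being essentially immediate. For $(\Leftarrow)$: since every off-diagonal entry $c^{+}_{ij}=|b_{ij}|$ is strictly positive (here I use $b_{ij}\neq 0$), the matrix $C^{+}$ is symmetric by signs and satisfies $|c^{+}_{ij}|=|b_{ij}|$ for all $i\neq j$, so Theorem~\ref{prop:same_symmetrizer} shows that $C^{+}$ is symmetrizable (with the skew-symmetrizer of $B$ serving as symmetrizer) and, as $c^{+}_{ii}=2$, is a quasi-Cartan companion of $B$; if $C^{+}$ is positive, then $B$ has a positive quasi-Cartan companion, namely $C^{+}$ itself.

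For $(\Rightarrow)$, let $C$ be a positive quasi-Cartan companion of $B$. Being symmetrizable, $C$ is symmetric by signs, and $|c_{ij}|=|b_{ij}|$; hence, by Theorem~\ref{prop:same_symmetrizer}, the skew-symmetrizer $D=\mathrm{diag}(d_{1},\dots,d_{n})$ of $B$, with all $d_{i}>0$, is a symmetrizer of both $C$ and $C^{+}$. Put $S=D\times C$ and $S^{+}=D\times C^{+}$: both are symmetric, $s_{ii}=s^{+}_{ii}=2d_{i}>0$, and for $i\neq j$ one has $s^{+}_{ij}=d_{i}|b_{ij}|=|d_{i}c_{ij}|=|s_{ij}|$, so $S^{+}$ is exactly the entrywise absolute value of $S$; moreover $S$ is positive definite because $C$ is positive, and $\mathrm{sgn}(s_{ij})=\mathrm{sgn}(c_{ij})$ for $i\neq j$ since $d_{i}>0$. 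The goal is now to produce a diagonal $\pm1$ matrix $E$ with $ESE=S^{+}$, for then, since $E^{T}=E$ and congruence preserves positive definiteness, $S^{+}=E^{T}SE$ is positive definite, which means precisely that $C^{+}$ is positive. To construct $E$, I would invoke Lemma~\ref{lem_2.1}(b): for pairwise different $i,j,k$ it gives $c_{ik}c_{kj}c_{ji}\ge 0$, and because $b_{ik},b_{kj},b_{ji}\neq 0$ this product is in fact strictly positive, so $\mathrm{sgn}(c_{ik})\,\mathrm{sgn}(c_{kj})\,\mathrm{sgn}(c_{ji})=1$. Now set $\epsilon_{1}=1$, $\epsilon_{i}=\mathrm{sgn}(c_{1i})$ for $i\ge 2$, and $E=\mathrm{diag}(\epsilon_{1},\dots,\epsilon_{n})$; applying the identity to the triple $\{1,i,j\}$ and using $\mathrm{sgn}(c_{1j})=\mathrm{sgn}(c_{j1})$ (symmetry by signs) yields $\epsilon_{i}\epsilon_{j}=\mathrm{sgn}(c_{1i})\,\mathrm{sgn}(c_{1j})=\mathrm{sgn}(c_{ij})$ for all $i\neq j$. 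A direct check then gives $(ESE)_{ii}=\epsilon_{i}^{2}s_{ii}=s^{+}_{ii}$ and $(ESE)_{ij}=\epsilon_{i}\epsilon_{j}s_{ij}=\mathrm{sgn}(s_{ij})\,s_{ij}=|s_{ij}|=s^{+}_{ij}$, so $ESE=S^{+}$, completing the argument.

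The step I expect to be the real obstacle is exactly the construction of $E$, that is, gauging away the sign pattern of $C$ globally; this is where the hypothesis that $b_{ij}\neq 0$ for \emph{every} $i,j$ is indispensable, since it makes the underlying signed graph complete and thereby lets the triangle relations of Lemma~\ref{lem_2.1}(b) propagate a single consistent family $(\epsilon_{i})$. Without completeness these local relations need not glue together, which explains why the statement is restricted to matrices with no zero off-diagonal entry. (One could instead try to compare $\det(C^{+})$ with $\det(C)$ on every principal submatrix, using Proposition~\ref{teo:submatrix}, Theorem~\ref{teo:equivalent_conditions} and the permutation expansion of the determinant, but the contributions of long cycles of even length carry inconvenient signs, so the conjugation argument above seems cleaner.)
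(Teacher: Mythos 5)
Your proof is correct and follows essentially the same route as the paper: both construct a diagonal $\pm1$ sign matrix from the signs of the first row of $C$, use the triangle sign relation (your Lemma~\ref{lem_2.1}(b); the paper accesses it through Theorem~\ref{teo:connected}) to show this matrix conjugates $C$ into $C^+$, and then transfer positivity. The only cosmetic difference is that you conjugate the symmetrized matrix $D\times C$ and invoke invariance of positive definiteness under congruence, whereas the paper writes $C = XC^+X$ directly and compares leading principal minors via $\det(C)=\det(C^+)$.
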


\begin{proof}
($\Rightarrow$) Suppose there exists a positive quasi-Cartan companion $C$. We will show by induction that $C^+$ is positive. If $C$ is $2 \times 2$ matrix, the result is clearly obtained. First, we show that $det(C^+) > 0$. Let $x_{ij} = sgn(c_{ij})$. Observe that $x_{ji} = x_{ij}$ and $x_{ii} =1$ due it to be a quasi-Cartan matrix. Since $c_{ij} \neq 0$ there is no ambiguity in this definition. Define $x_i = x_{1i}$ for all $i$. We will show, by induction on $i$, that $x_{ij} = x_i \cdot x_j$ for all $i, j$. Clearly, $x_{ii} = x_i^2$.

Since $x_1 = 1$, we clearly have that $x_{1i} = x_1 \cdot x_i$. Suppose that $x_{kj} = x_k \cdot x_j$ for all $k < i$. By Theorem~\ref{teo:connected}, $c_{ij} \cdot c_{jk} \cdot c_{ki} > 0$. Thus, $x_{ij} \cdot x_{jk} \cdot x_{ki} = 1$. Therefore, $x_{ij} = x_{jk} \cdot x_{ki} = x_{kj} \cdot x_{ki} = x_k \cdot x_j \cdot x_k \cdot x_i = x_i \cdot x_j$.

Since $x_{ij} = x_i \cdot x_j$ and $c_{ij}=x_{ij} \cdot c^+_{ij}$, we have that $C = X C^+ X$ where
$X =  \left(
        \begin{array}{ccc}
          x_1 & & 0 \\
          & \vdots & \\
          0 & & x_n \\
        \end{array}
      \right)$.

It follows that $det(C) = det(X) \cdot det(C^+) \cdot det(X) = det(C^+)$. By induction on dimension of $C^+$, we have that all leading principal minors of $C^+$ are positive.

($\Leftarrow$) 
It follows from the fact that $C$ is quasi-Cartan matrix and by Theorem~\ref{prop:same_symmetrizer}, that $C^+$ is a quasi-Cartan companion of $B$.
\end{proof}

We must observe that Theorem~\ref{prop:all_positive2} does not hold for all skew-symmetrizable matrices, as we can see in the following example.

\begin{example}
Let $B =  \left(
        \begin{array}{cccc}
          0 & 1 & 1 & 0 \\
          1 & 0 & 0 & 1 \\
          1 & 0 & 0 & 1 \\
          0 & 1 & 1 & 0 \\
        \end{array}
      \right)$,
      $C =  \left(
        \begin{array}{cccc}
          2 & -1 & 1 & 0 \\
          -1 & 2 & 0 & 1 \\
          1 & 0 & 2 & 1 \\
          0 & 1 & 1 & 2 \\
        \end{array}
      \right)$,
     $C^+ =  \left(
        \begin{array}{cccc}
          2 & 1 & 1 & 0 \\
          1 & 2 & 0 & 1 \\
          1 & 0 & 2 & 1 \\
          0 & 1 & 1 & 2 \\
        \end{array}
      \right)$. The quasi-Cartan companion $C^+$ is not positive, but $C$ is.
\end{example}

\begin{proposition}\label{prop:allone}
Let $B$ be the skew-symmetric matrix defined by $|b_{ij}|=1$ for all $i \neq j$, then $B$ has a positive quasi-Cartan companion.
\end{proposition}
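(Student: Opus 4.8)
The plan is to exhibit a positive quasi-Cartan companion of $B$ explicitly, and the claim is that the ``all-signs-positive'' companion $C^+$ already does the job. First I would note that $C^+$, defined by $c^+_{ii}=2$ and $c^+_{ij}=1$ for $i\neq j$, is symmetric, hence symmetrizable with symmetrizer the identity; it has all diagonal entries equal to $2$; and $|c^+_{ij}|=1=|b_{ij}|$ for all $i\neq j$. Thus $C^+$ is a quasi-Cartan companion of $B$, and it only remains to check that it is positive. (Equivalently, since $B$ is skew-symmetric with $|b_{ij}|=1\neq 0$ for every $i\neq j$, Theorem~\ref{prop:all_positive2} already reduces the whole statement to the positivity of $C^+$.)

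To prove that $C^+$ is positive I would write $C^+ = I_n + J_n$, where $I_n$ is the identity and $J_n = \mathbf{1}\mathbf{1}^T$ is the all-ones matrix. By Theorem~\ref{teo:equivalent_conditions} it suffices to show that every leading principal minor of $C^+$ is positive. The leading principal submatrix of order $k$ is exactly $I_k + \mathbf{1}\mathbf{1}^T$ (with $\mathbf{1}$ of length $k$), and by the matrix determinant lemma $det(I_k + \mathbf{1}\mathbf{1}^T) = 1 + \mathbf{1}^T\mathbf{1} = k+1 > 0$. The same fact can be obtained by a one-line induction (subtract the first row from all the others and expand), or by observing that the eigenvalues of $C^+$ are $n+1$ (simple, with eigenvector $\mathbf{1}$) and $1$ (with multiplicity $n-1$), all strictly positive. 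Either way $C^+$ is positive, so $C^+$ is a positive quasi-Cartan companion of $B$.

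I do not expect a real obstacle here: the whole point of the statement is that, in contrast with the example following Theorem~\ref{prop:all_positive2}, the presence of larger off-diagonal products is what can destroy positivity, and when every $|b_{ij}|=1$ this cannot happen. The only step involving a computation is evaluating the leading principal minors of $I_n+J_n$, which is just the determinant of a rank-one update of the identity and is entirely elementary.
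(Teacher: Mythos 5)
Your proposal is correct and follows essentially the same route as the paper: show $C^+$ is positive and invoke Theorem~\ref{prop:all_positive2} (the paper computes $\det(C^+)=n+1$ and asserts positivity). Your version is in fact more complete, since you verify \emph{all} leading principal minors ($k+1>0$ for each order $k$) via the rank-one update $I_k+\mathbf{1}\mathbf{1}^T$, whereas the paper only states the top-order determinant and declares positivity ``clear.''
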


\begin{proof}
%
If $B$ is a $n \times n$ matrix, then we can calculate that $det(C^+)=n+1$. Clearly, $C^+$ is positive. The result follows from Theorem~\ref{prop:all_positive2}.
\end{proof}

%

%


We present an algorithm with time complexity $\theta(n^4)$ to decide whether the given matrix $C$ is positive.

This algorithm is used as verifier and, thus, with the algorithm and the given matrix $C$, we prove that the problem of deciding if there exists a positive quasi-Cartan companion belongs to NP class. For more information about P, NP and NP-complete classes, see~\cite{CLSR2002,S2006}.

\begin{center}
\begin{minipage}{0.75 \textwidth}
\begin{algorithm2e}[H]
 \LinesNumbered
\small
 \BlankLine
 \KwIn{A symmetrizable $n \times n$ matrix $C$.}
 \KwOut{The response if the matrix is positive or not.}
 \BlankLine

 $n' \leftarrow n$\\
 $C' \leftarrow C$
 \BlankLine
	
\ForEach{$i \in \{1, \ldots, n\}$}{
	\If{($det(C') \leq 0$)}{
		\Return NO
	}
	$C' \leftarrow C'_{[n',n']}$\\
	$n' \leftarrow n'-1$\\
}

\Return YES

 \BlankLine

\BlankLine

\caption{$IsPositive(C)$ \label{alg:is_positive}}
\end{algorithm2e}
\end{minipage}
\end{center}

We also elaborate an exponential algorithm to find the positive quasi-Cartan companion of skew-symmetrizable matrix $B$.

\begin{center}
\begin{minipage}{0.85 \textwidth}
\begin{algorithm2e}[H]
\SetKwFunction{positive}{IsPositive}
 \LinesNumbered
\small
 \BlankLine
 \KwIn{A skew-symmetrizable matrix $B$.}
 \KwOut{A positive quasi-Cartan companion $C$, if there exists.}
 \BlankLine

 $C \leftarrow |B|$ \tcc{\footnotesize The matrix $C$ is initialized with $B$ by positive entries.}

 \ForEach{$i \in \{1, \ldots, n\}$}{
 	$c_{ii} \leftarrow 2$\\
 }

 \eIf{\positive{C}}{
 	\Return $C$
 }
 {
 	\ForEach{$x \in \ \{(x_{ij}) | x_{ij} \in \{-1, 1\}$ and $i < j\}$}{
	
		\ForEach{$i \in \{1, \ldots, n\}$}{
			\ForEach{$j \in \{i+1, \ldots, n\}$}{
				$c_{ij} \leftarrow x_{ij} \cdot |b_{ij}|$\\
				$c_{ji} \leftarrow x_{ij} \cdot |b_{ji}|$
			}
		}
		{
			\If{\positive{C}}{
				\Return $C$
			}
		}
	}
}

\Return ``There is no positive quasi-Cartan companion of $B$''

 \BlankLine

\BlankLine

\caption{$PositiveQuasiCartanCompanion(B)$ \label{alg:quasi_Cartan}}
\end{algorithm2e}
\end{minipage}
\end{center}

\section{Conclusions}
\label{sec:conclusions}

In this paper, we present some mathematical properties of symmetrizable, skew-symmetrizable and positive quasi-Cartan matrices. These matrices are important in the context of cluster algebra, to decide if it is of finite type.

Importantly, the skew-symmetric matrices represent directed graphs that arise from the cluster algebras, called quivers.

We also developed two polynomial algorithms for symmetrizers matrices: one to decide whether a symmetrizer matrix exists with time complexity $\theta(n^2)$, and another to find a symmetrizer matrix, if there exists, with time complexity $\theta(n^2)$ in the worst case and $\theta(n)$ in the best case.

For a skew-symmetrizable matrix, we present an algorithm to decide whether a matrix is positive or not and another, exponential, to find a positive quasi-Cartan companion, if there exists.

\section*{Acknowledgments}
\label{sec:acknowledgments}

We would like to thank the Funda\c{c}\~{a}o de Amparo \`{a} Pesquisa do Estado de Goi\'{a}s (FAPEG) and the Coordena\c{c}\~{a}o de Aperfei\c{c}oamento de Pessoal de N\'{i}vel Superior (CAPES) for partial support given to this research by Programa Nacional de Coopera\c{c}\~{a}o Acad\^{e}mica (Procad) and also the Conselho Nacional de Desenvolvimento Cient\'{i}fico e Tecnol\'{o}gico (CNPq) and Funda\c{c}\~{a}o Carlos Chagas Filho de Amparo \`{a} Pesquisa do Estado do Rio de Janeiro (FAPERJ).


\bibliographystyle{apalike}

\end{document}